%
%
%
%
%
%
%
\documentclass[%
 reprint,
 amsmath,amssymb,
 aps,
]{revtex4-2}

\usepackage{graphicx}
\usepackage{dcolumn}
\usepackage{bm}
\usepackage{color}
\usepackage{amsfonts}
\usepackage{dsfont}
\usepackage{amsmath}
\usepackage{amssymb}
\usepackage{changes}
\usepackage{comment}


\newtheorem{corollary}{Corollary}

\newtheorem{proposition}{Proposition}

\newenvironment{proof}[1][Proof]{\noindent\textbf{#1.} }{\ \rule{0.5em}{0.5em}}

 \newcommand{\ket}[1]{|#1\rangle}
 \newcommand{\bra}[1]{\langle #1|}
 
 \newcommand{\proj}[1]{\ket{#1}\bra{#1}}

\newcommand{\eye}{\mathds{1}}

\newcommand{\Tr}{\text{Tr}}

\begin{document}


\title{Measurement-device-independent entanglement detection \\ for continuous-variable systems}

\author{Paolo Abiuso}\email{paolo.abiuso@icfo.eu}
\affiliation{ICFO – Institut de Ciencies Fotoniques, The Barcelona Institute of Science and Technology, 08860 Castelldefels (Barcelona), Spain}
\author{Stefan B\"auml}
\affiliation{ICFO – Institut de Ciencies Fotoniques, The Barcelona Institute of Science and Technology, 08860 Castelldefels (Barcelona), Spain}

\author{Daniel Cavalcanti}
\affiliation{ICFO – Institut de Ciencies Fotoniques, The Barcelona Institute of Science and Technology, 08860 Castelldefels (Barcelona), Spain}
 
\author{Antonio Ac\'in}
\affiliation{ICFO – Institut de Ciencies Fotoniques, The Barcelona Institute of Science and Technology, 08860 Castelldefels (Barcelona), Spain}
\affiliation{ICREA - Instituci\'o Catalana de Recerca i Estudis Avan\c{c}ats, Passeig Lluis Companys 23, 08010 Barcelona, Spain}

\date{\today}

\begin{abstract}
We study the detection of continuous-variable entanglement, for which most of the existing methods designed so far require a full specification of the devices, and we present protocols for entanglement detection in a scenario where the measurement devices are completely uncharacterised.
We first generalise, to the continuous variable regime, the seminal results by Buscemi [PRL 108, 200401 (2012)] and  Branciard \emph{et al.} [PRL 110, 060405 (2013)], showing that all entangled states can be detected in this scenario.  Most importantly, we then describe a practical protocol that allows for the measurement-device-independent certification of entanglement of all two-mode entangled Gaussian states. This protocol is feasible with current technology as it makes only use of standard optical setups such as coherent states and homodyne measurements. 
\end{abstract}

\maketitle

\section{Introduction}

Entanglement is the main resource for a broad range of applications in quantum information science, among which are quantum key distribution \cite{Ekert91}, quantum computation \cite{JL2003}, and quantum metrology \cite{Toth2014}. It is therefore crucial to develop methods to detect entanglement that are reliable and practical. The most common method to detect entanglement is given by entanglement witnesses~\cite{EntReview_2009}. However, to be reliable this technique requires a perfect implementation of the measurements. Indeed, small calibration errors can lead to false-positive detection of entanglement \citep{Rosset12,Moroder10}, which can be critical when using  the wrongly detected entangled state for quantum information purposes. A possible way of circumventing this problem is to move into the so called device-independent (DI) scenario~\cite{NLreview}. In this framework measurements do not need to be characterised, since  entanglement is detected through the violation of Bell inequalities, which only use the statistics provided by the experiment, without making any assumptions on the real implementation. The DI scenario is however  stringent from an experimental point of view, requiring low levels of noise and high detection efficiencies. This is why other approaches requiring intermediate level of trust on the devices have been developed. In particular, there exist methods that do not require any characterization of the measurement implemented for entanglement detection, known as measurement-device-independent (MDI)~\cite{buscemi2012all,branciard2013measurement}.

Here, we consider the problem of entanglement detection in continuous-variable (CV) systems, for which very little is known about methods not requiring a full characterization of measurement devices. 
A fully DI approach is complex because of the difficulty of finding useful Bell tests for continuous-variable states. For instance, in the Gaussian regime, which is the most feasible experimentally, DI entanglement detection is impossible because no Bell inequality can be violated~\cite{NLreview}, hence intermediate approaches are necessary. The main goal of this work is to provide methods for MDI entanglement detection in CV systems. We first demonstrate that, in principle, all entangled states can be detected in this scenario. Then, we describe a MDI protocol that can detect the entanglement of all two-mode Gaussian states. Our protocol only  relies on the use of trusted, well-calibrated, sources of coherent states, the easiest to prepare in the lab.

An entanglement detection scenario where two parties, Alice and Bob, do not assume a particular description of their measurement but use trusted sources of states was first introduced by Buscemi \cite{buscemi2012all}. Namely, let us consider that Alice and Bob can produce states $\psi_{A'}^\mu$ and $\psi_{B'}^{\nu}$ according to some distributions $P_{A'}(\mu)$, $P_{B'}(\nu)$ respectively. Alice and Bob can use these states as inputs to their measurement devices, which return outcomes $a$ and $b$ respectively. These outcomes occur with probability $P(a,b|\psi_{A'}^\mu,\psi_{B'}^{\nu})=\Tr[M_{AA'}^a\otimes N_{BB'}^b (\psi_{A'}^\mu \otimes \rho_{AB} \otimes \psi_{B'}^{\nu})]$, where $M_{AA'}^a$ and $N_{BB'}^b$ are unknown measurement operators defining a Positive-Operator Valued Measure (POVM). The main goal of Alice and Bob is to determine if $\rho_{AB}$ is entangled based on the knowledge of $\psi_{A'}^\mu$, $\psi_{B'}^{\nu}$, $P_{A'}(\mu)$, $P_{B'}(\nu)$, and $P(a,b|\psi_{A'}^\mu,\psi_{B'}^{\nu})$. Besides the calibration issue discussed before, this scenario is motivated by cryptographic tasks in which Alice and Bob do not trust the provider of the measurement devices they are using \cite{lo2012measurement,pirandola2015high,li2014continuous,ma2014gaussian}.

For finite dimensional Hilbert spaces, Buscemi has shown that any entangled state $\rho_{AB}$ can be certified in this scenario, but his proof is not constructive \cite{buscemi2012all}. The authors of \cite{branciard2013measurement} have shown how to construct MDI entanglement witness from standard entanglement witnesses. 
A different route was considered in \cite{vsupic2017measurement,Rosset_2018}, where the question was formulated as a convex optimisation problem that can be efficiently solved numerically.

In what follows we first generalise the results of \cite{buscemi2012all,branciard2013measurement} and show that the entanglement of every  CV entangled state can in principle be detected in a MDI scenario. 
We then move to the experimentally relevant case of Gaussian states and operations and show a MDI protocol that is able to certify the entanglement of all two-mode Gaussian entangled states. This protocol is feasible with current technology in that it only requires the production of coherent states and the implementation of homodyne measurements. Moreover, our approach provides an interesting connection between MDI entanglement detection and quantum metrology. 

\section{Reduction to process tomography}\label{sec:ProofOfPrinciple}
In this section, we show that it is possible to detect the entanglement of any entangled state in a MDI scenario where Alice and Bob use coherent states as trusted inputs (the proof is presented for two-mode bipartite states but can be generalized to $n$-modes, see~\cite{supmat}). 

Suppose Alice and Bob are in possession of trusted sources producing coherent states $\ket{\alpha}_{A'}$ and $\ket{\beta}_{B'}$, respectively, according to some distribution. The shared entangled state is  $\rho_{AB}$. The systems $AA'$ and $BB'$ are then projected onto respective two-mode squeezed vacuum (TMSV)  states, i.e. the measurement $\{\proj{\Phi^{(r)}},\eye-\proj{\Phi^{(r)}}\}$ is performed on both $AA'$ and $BB'$ ($r$ is the squeezing parameter). Conditioned on $\alpha$ and $\beta$, the probability of both measurements obtaining output `1', corresponding to the projector $\proj{\Phi^{(r)}}\equiv\Phi^{(r)}$, can be expressed as 
\begin{align}
&P_\rho(1,1|\alpha,\beta)\nonumber\\
&=\Tr\left[\left(\Phi_{AA'}^{(r)}\otimes\Phi_{BB'}^{(r)}\right)\left(\proj{\alpha}_{A'}\otimes\rho_{AB}\otimes\proj{\beta}_{B'}\right)\right]\nonumber\\
&=\Tr\left[M_{A'B'}^{(r)}\proj{\alpha}_{A'}\otimes\proj{\beta}_{B'}\right],
\label{eq:P11}
\end{align}
where we have defined
\begin{equation}\label{eq:POVM}
M_{A'B'}^{(r)}:=\Tr_{AB}\left[\left(\Phi_{AA'}^{(r)}\otimes\Phi_{BB'}^{(r)}\right)\left(\rho_{AB}\otimes\mathds{1}_{A'B'}\right)\right],
\end{equation}
which is a POVM element by construction. Our main observation in this section is that non-separability of the  POVM element defined by (\ref{eq:POVM}) is equivalent to the underlying state being entangled. Namely, we have
\begin{proposition}\label{lemma2}
For any $r>0$, the POVM element $M_{A'B'}^{(r)}$ defined by eq. (\ref{eq:POVM}) is entangled if and only if $\rho_{AB}$ is entangled.
\end{proposition}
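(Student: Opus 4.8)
The plan is to recognise the linear map $\rho_{AB}\mapsto M_{A'B'}^{(r)}$ of eq. (\ref{eq:POVM}) as an invertible local filtering followed by a global transposition, and then to use that neither operation can create or destroy entanglement. The technical ingredient is the ``transpose trick'' for the two-mode squeezed vacuum: writing $\ket{\Phi^{(r)}}=\sqrt{1-\lambda^{2}}\sum_{n\ge 0}\lambda^{n}\ket{n}\ket{n}$ with $\lambda=\tanh r$, and introducing the positive diagonal operator $\Gamma^{(r)}:=\sqrt{1-\lambda^{2}}\,\lambda^{\hat n}$ ($\hat n$ the number operator), one evaluates the partial trace in (\ref{eq:POVM}) with the identity $(O\ten\eye)\ket{\Phi^{(r)}}=(\eye\ten O^{T})\ket{\Phi^{(r)}}$, applied to each pair $AA'$ and $BB'$, obtaining
\begin{equation*}
M_{A'B'}^{(r)}=\left(\Gamma^{(r)}\ten\Gamma^{(r)}\right)\rho_{A'B'}^{T}\left(\Gamma^{(r)}\ten\Gamma^{(r)}\right),
\end{equation*}
where $\rho_{A'B'}$ denotes $\rho_{AB}$ read through the canonical identifications $\mathcal H_{A}\cong\mathcal H_{A'}$, $\mathcal H_{B}\cong\mathcal H_{B'}$, and $T$ is transposition in the product Fock basis.

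Granting this identity, the argument proceeds in two moves. First, global transposition $X\mapsto X^{T}$ is a positive, trace-norm-preserving involution satisfying $(\sigma^{A}\ten\sigma^{B})^{T}=(\sigma^{A})^{T}\ten(\sigma^{B})^{T}$, so it maps the cone of separable operators onto itself; hence $\rho_{A'B'}^{T}$ is entangled if and only if $\rho_{AB}$ is. Second, for $r>0$ one has $\lambda\in(0,1)$, so every eigenvalue $\sqrt{1-\lambda^{2}}\,\lambda^{n}$ of $\Gamma^{(r)}$ is strictly positive and $\Gamma^{(r)}$ has full rank; conjugation by the invertible local operator $G:=\Gamma^{(r)}\ten\Gamma^{(r)}$, namely $X\mapsto GXG$, manifestly preserves separability (a product of states is sent to a sub-normalised product of states), and so does the inverse filtering $X\mapsto G^{-1}XG^{-1}$, whence this step too preserves the separable/entangled dichotomy. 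Composing the two moves gives the claim. (As a consistency check, at $r=0$ one has $\Gamma^{(0)}=\proj{0}$, which is rank one, and $M_{A'B'}^{(0)}\propto\proj{0}\ten\proj{0}$ is always separable --- exactly why the hypothesis $r>0$ is required.)

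The step needing genuine care --- and the one I expect to be the main obstacle --- is making the filtering argument rigorous in infinite dimensions, since $(\Gamma^{(r)})^{-1}=\cosh r\,(\coth r)^{\hat n}$ is unbounded, so one cannot naively conjugate a separable decomposition of $M_{A'B'}^{(r)}$ by $G^{-1}$. The direction ``$\rho_{AB}$ separable $\Rightarrow M_{A'B'}^{(r)}$ separable'' is harmless, as $G$ is bounded and local and $\|GXG\|_{1}\le\|G\|^{2}\|X\|_{1}$ keeps everything trace class. For the converse I would truncate: let $P_{N}$ project onto Fock levels $\le N$ in each mode, so $P_{N}$ commutes with $\Gamma^{(r)}$ and $G^{-1}P_{N}$ is a bounded (indeed finite-rank) local operator. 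If $M_{A'B'}^{(r)}$ is separable, then so is $P_{N}\rho_{A'B'}^{T}P_{N}=(G^{-1}P_{N})\,M_{A'B'}^{(r)}\,(P_{N}G^{-1})$, being a bounded local conjugation of the separable operator $M_{A'B'}^{(r)}$; since $P_{N}\rho_{A'B'}^{T}P_{N}\to\rho_{A'B'}^{T}$ in trace norm and, after normalisation, the set of separable states is trace-norm closed, $\rho_{A'B'}^{T}$ --- and hence $\rho_{AB}$ --- is separable. This closes the contrapositive and with it the Proposition; everything else is routine bookkeeping, the whole content of the statement being carried by the transpose-trick identity of the first paragraph.
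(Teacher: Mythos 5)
Your proof is correct, and it rests on the same structural identity as the paper's, namely the transpose trick giving $M_{A'B'}^{(r)}\propto\lambda^{\hat n_A+\hat n_B}\rho_{AB}^{T}\lambda^{\hat n_A+\hat n_B}$ as in Eq.~\eqref{eq:entM} (your prefactor $(1-\lambda^2)^2$ is what a direct computation yields; the constant is immaterial for separability). The ``separable $\Rightarrow$ separable'' direction is handled essentially identically in both arguments. Where you genuinely diverge is the hard direction. The paper proves ``$\rho_{AB}$ entangled $\Rightarrow M_{A'B'}^{(r)}$ entangled'' constructively: from a witness $W$ for $\rho_{AB}$ it builds $\tilde W=\lambda^{-\hat n_A-\hat n_B}W^{T}\lambda^{-\hat n_A-\hat n_B}$ and checks that $\Tr[M_{A'B'}^{(r)}\tilde W]\propto\Tr[\rho W]<0$ while $\Tr[\tilde W(M_A^\mu\ten N_B^\mu)]\geq 0$ on separable POVM elements. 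You instead prove the contrapositive by undoing the filter: conjugating a separable $M_{A'B'}^{(r)}$ by the bounded, finite-rank, local operators $G^{-1}P_N$ yields the separable operators $P_N\rho^{T}P_N$, which converge in trace norm to $\rho^{T}$, and trace-norm closedness of the separable set finishes the argument. Your route buys rigor in infinite dimensions: it never touches an unbounded observable, whereas the paper's $\tilde W$ contains $\lambda^{-\hat n}$ and is bounded only if $W$ decays sufficiently fast in energy --- a point the paper has to address separately in its appendix and which your argument renders moot. What you give up is constructivity: the paper's $\tilde W$ is precisely the operational MDI witness exploited afterwards (entanglement can be certified by evaluating $\Tr[M_{A'B'}^{(r)}\tilde W]$ without full tomography), while your argument only establishes the equivalence and still presupposes tomographic reconstruction of $M_{A'B'}^{(r)}$ followed by an abstract separability test. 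Both approaches are legitimate; for Proposition~\ref{lemma2} as stated, yours is complete.
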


\begin{proof}
Let us assume $\rho_{AB}$ is separable, i.e. 
\begin{equation}
\rho_{AB}=\sum_\mu p_\mu\rho^\mu_{A}\otimes\sigma^\mu_{B}.
\end{equation}
We can then define
\begin{align}
\nonumber
M^{(r)\mu}_{A'}:=&\Tr_{A}\left[\Phi_{AA'}^{(r)}\left(\rho^\mu_{A}\otimes\mathds{1}_{A'}\right)\right],\\
N^{(r)\mu}_{B'}:=&\Tr_{B}\left[\Phi^{(r)}_{BB'}\left(\sigma^\mu_{B}\otimes\mathds{1}_{B'}\right)\right],
\end{align}
which are POVM elements by construction. It is easy to see that ${
M_{A'B'}^{(r)}=\sum_\mu p_\mu M^{(r)\mu}_{A'}\otimes N^{(r)\mu}_{B'}}$, which is separable. It remains to be shown that the POVM element defined by Eq.~\eqref{eq:POVM}, which can be rewritten as \footnote{Here we used the {well-known} decomposition of two-mode squeezed states in the Fock basis
\begin{align*}
\ket{\Phi^{(r)}}= \sqrt{1-\tanh^2{r}}\; \sum_{i=0}^{\infty} (\tanh{r})^i \ket{ii}
\end{align*}
}
\begin{equation}\label{eq:entM}
 M_{A'B'}^{(r)}=(1-\lambda^2) \lambda^{\hat{n}_A+\hat{n}_B}\rho^T_{AB} \lambda^{\hat{n}_A+\hat{n}_B}
\end{equation}
(where ${\lambda=\tanh r}$ and $\hat{n}_X=a^\dagger_X a_X$ the number operator on mode $X$), is entangled for all entangled $\rho_{AB}$. In fact, suppose there exists an entanglement witness $W$ such that $\Tr[\rho W]<0$ while $\Tr[\rho' W]\geq 0$ for any $\rho'$ separable~\footnote{Such a witness always exists, as the set of separable states is defined to be closed for operational consistence, see e.g.~\cite{regula2021operational,holevo2011probabilistic}}.
From $W$ we can obtain a Hermitian operator $\tilde{W}$ such that $\Tr[M_{A'B'}^{(r)}\tilde{W}]<0$, whereas for any separable POVM $\Tr[\sum_\mu p_u (M_A^{\mu}\otimes N_B^{\mu})\tilde{W}]\geq 0$. Consider in fact 
\begin{align}
\label{eq:Wtilda}
    \tilde{W}=\lambda^{-\hat{n}_A-\hat{n}_B} W^T \lambda^{-\hat{n}_A-\hat{n}_B}\;.
\end{align}
It is then easy to see that
\begin{align}
\label{eq:POVM_EW}
\Tr[M_{A'B'}^{(r)}\tilde{W}]=(1-\lambda^2) \Tr[\rho W]<0\;.
\end{align} For separable POVMs, on the other hand, it holds
\begin{equation}
\sum_\mu p_\mu \Tr[\tilde{W}(M_A^\mu\otimes N_B^\mu)]=\sum_\mu p_\mu \Tr[W(\tilde{M}_A^\mu\otimes \tilde{N}_B^\mu)],
\end{equation}
where
$
\tilde{M}_A^\mu=\lambda^{\hat{n}_A}({M}_A^\mu)^T\lambda^{\hat{n}_A}\;,\;
\tilde{N}_B^\mu=\lambda^{\hat{n}_B}(N_B^\mu)^T\lambda^{\hat{n}_B}\;.
$
The operators $\tilde{M}_A^\mu$ and $\tilde{N}_B^\mu$ are manifestly positive semidefinite, meaning that under a proper renormalization they can be seen as states, thus generating a separable state $\rho'$ such that $\Tr[\rho' W]\geq 0$.
This implies 
\begin{align}
\sum_\mu p_\mu \Tr[W(\tilde{M}_A^\mu\otimes \tilde{M}_B^\mu)]\geq 0\;,
\end{align}
which finishes the proof.
\end{proof}

We show in~\cite{supmat} that the violation of the derived witness~\eqref{eq:POVM_EW} scales as $1/N$, where $N$ is the energy scale (number of photons) defined by the original witness, which is to be compared with the $1/d$ scaling found in~\cite{branciard2013measurement} ($d$ being the Hilbert space dimension of $\rho_{AB}$).

As a consequence of Proposition \ref{lemma2}, Alice and Bob can certify the entanglement of $\rho_{AB}$ in a MDI way, if their output statistics allow them to fully reconstruct the POVM element $M_{A'B'}^{(r)}$. As in the case of discrete variables \cite{vsupic2017measurement}, this can be achieved by means of process tomography \cite{lobino2008complete,d2001quorum}.
The set of all coherent states form a tomographically complete set via the Glauber-Sudarshan P-representation \cite{sudarshan1963equivalence,glauber1963coherent}.
Further it has been shown that 
discrete sets of coherent states can form tomographically complete sets \cite{janszky1990squeezing,janszky1995quantum}. As a special case of process tomography with coherent states, POVMs can be fully reconstructed by their output statistics \cite{lundeen2009tomography,zhang2012mapping,grandi2017experimental}. 
Once Alice and Bob have reconstructed $M^{(r)}_{A'B'}$, they can determine whether it is non-separable using an entanglement criterion. We also note that, for a given entangled state, if the witness ${W}$ is known, all that is necessary is to evaluate $\Tr[M^{(r)}_{A'B'} \tilde{W}]$, which might not require full tomography. In summary, we have the following
\begin{corollary}
For every entangled state $\rho_{AB}$, if $\ket{\alpha}$ and $\ket{\beta}$ are chosen from tomographically complete sets, Alice and Bob can certify the entanglement of $\rho_{AB}$ in a measurement-device-independent way.
\end{corollary}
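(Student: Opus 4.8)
The plan is to obtain the Corollary by stringing together three ingredients: (i) the observed statistics determine the effective POVM element $M^{(r)}_{A'B'}$, (ii) Proposition~\ref{lemma2} turns the entanglement question for $\rho_{AB}$ into a separability question for $M^{(r)}_{A'B'}$, and (iii) the reduction is sound and complete for \emph{arbitrary} uncharacterised devices. I would start with (i). Equations~\eqref{eq:P11}--\eqref{eq:POVM} already present $P_\rho(1,1|\alpha,\beta)$ as the single bilinear functional $(\alpha,\beta)\mapsto\Tr[M^{(r)}_{A'B'}\,\proj{\alpha}_{A'}\ten\proj{\beta}_{B'}]$ evaluated on product coherent inputs. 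Since by hypothesis $\ket{\alpha}$ and $\ket{\beta}$ range over tomographically complete sets --- which holds both for the full continuum of coherent states, via the Glauber--Sudarshan $P$-representation~\cite{sudarshan1963equivalence,glauber1963coherent}, and for suitable discrete families~\cite{janszky1990squeezing,janszky1995quantum} --- this functional uniquely fixes $M^{(r)}_{A'B'}$. Hence Alice and Bob can reconstruct $M^{(r)}_{A'B'}$ exactly from their recorded data, by the POVM/process-tomography arguments of~\cite{lobino2008complete,d2001quorum,lundeen2009tomography,zhang2012mapping,grandi2017experimental,vsupic2017measurement}.

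Next I would invoke Proposition~\ref{lemma2}: $M^{(r)}_{A'B'}$ is entangled if and only if $\rho_{AB}$ is. So Alice and Bob declare $\rho_{AB}$ entangled precisely when a separability test applied to the reconstructed $M^{(r)}_{A'B'}$ returns ``non-separable''; since the set of separable operators is closed, a Hermitian witness certifying this always exists (cf.\ the explicit $\tilde W$ built in the proof of Proposition~\ref{lemma2}). I would also stress the shortcut already noted in the text: if a witness $W$ for $\rho_{AB}$ is known in advance, this collapses to estimating the single number $\Tr[M^{(r)}_{A'B'}\tilde W]=(1-\lambda^2)\Tr[\rho_{AB}W]$ with $\tilde W$ from~\eqref{eq:Wtilda}, so full tomography is then unnecessary and only finitely many coherent-state settings are required.

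Finally I would argue point (iii), which is what makes the statement measurement-device-independent. For soundness, observe that the ``$\rho$ separable $\Rightarrow$ $M^{(r)}$ separable'' direction in the proof of Proposition~\ref{lemma2} never used the specific form of the TMSV projectors: for \emph{any} operators $M^a_{AA'},N^b_{BB'}$, if $\rho_{AB}=\sum_\mu p_\mu\rho^\mu_A\ten\sigma^\mu_B$ then $\Tr_{AB}[(M^a_{AA'}\ten N^b_{BB'})(\rho_{AB}\ten\eye_{A'B'})]$ is separable; thus a non-separable reconstructed operator can never arise from a separable $\rho_{AB}$, no matter what the devices do. For completeness, Proposition~\ref{lemma2} exhibits one honest strategy --- projecting $AA'$ and $BB'$ onto TMSV states --- that yields a non-separable $M^{(r)}_{A'B'}$ for every entangled $\rho_{AB}$, which the above procedure detects. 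The step I expect to require the most care is the tomographic reconstruction~(i) in the infinite-dimensional CV setting: one must verify that the coherent-state data genuinely pin down the (trace-class but generally unbounded-support) operator $M^{(r)}_{A'B'}$, controlling convergence of the $P$-representation and confirming that a finite, experimentally realisable set of coherent states already suffices --- which is precisely the content of the CV process-tomography results cited above.
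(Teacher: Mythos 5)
Your proposal follows essentially the same route as the paper: reconstruct $M^{(r)}_{A'B'}$ by coherent-state process tomography (justified by tomographic completeness via the $P$-representation) and then apply Proposition~\ref{lemma2} to reduce the entanglement of $\rho_{AB}$ to the non-separability of the reconstructed POVM element, including the shortcut of evaluating $\Tr[M^{(r)}_{A'B'}\tilde W]$ when a witness is known. Your point (iii), making explicit that a separable $\rho_{AB}$ yields a separable effective POVM element for \emph{arbitrary} uncharacterised product measurements (soundness), is a correct and worthwhile elaboration of what the paper leaves implicit, but it does not change the argument's structure.
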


The results presented in this section suffer from practical problems in their realization: firstly, they rely on performing the POVM that projects on the two mode squeezed states defined in Eq.~\eqref{eq:P11}.
A typical scheme for such measurement  involves photodetection~\cite{supmat}, which typically has low efficiency and high cost. Secondly, the full tomography could be in general experimentally inefficient. 
Therefore, the previous proof is mostly a proof-of-principle result.
Next, we show that feasible schemes for MDI entanglement detection are possible. In fact, we propose an experimentally-friendly MDI entanglement detection protocol which is based solely on homodyne measurements and can detect all two-mode Gaussian entangled states.


\section{MDI Entanglement Witness for all two-mode Gaussian states}
\label{sec:optical}

In this section we present a practical method for MDI entanglement certification of Gaussian states that can be implemented using readily available optical components. Our method is inspired by the entanglement witness introduced in a seminal paper by Duan \emph{et al.}~\cite{duan2000inseparability} (see also Simon~\cite{Simon}). In that work,
it was proven that the inequality 
\begin{align}
\label{eq:duan_witness}
\left\langle{\rm EW}_{\kappa}\right\rangle \equiv\left\langle\Delta^2 \hat{u}_\kappa\right\rangle+\left\langle\Delta^2\hat{v}_\kappa\right\rangle\geq \frac{\kappa^2+\kappa^{-2}}{2}\,,
\end{align}
where $\langle\Delta^2 \hat{O}\rangle$ is the variance of the operator $\hat{O}$, and
\begin{align}
\label{eq:uava}
 \hat{u}_\kappa=\left(\kappa\hat{x}_A-\frac{\hat{x}_B}{\kappa}\right)\,, \quad \hat{v}_\kappa=\left(\kappa\hat{p}_A+\frac{\hat{p}_B}{\kappa}\right)\,,
\end{align}
(i) holds for any two-mode separable  state, real number $\kappa$, where $\kappa>0$ without loss of generality, and pairs of  orthogonal  quadratures of the bosonic modes $A$ and $B$  
\footnote{We use here a notation similar to \cite{braunstein2005quantum}. In particular the quadratures are defined as
\begin{align*}
\hat{a}=\hat{x}+i\hat{p}\,,\quad
\hat{a}^\dagger=\hat{x}-i\hat{p}\,,
\end{align*}
such that
\begin{align*}
[\hat{x},\hat{p}]=\frac{1}{4i}[\hat{a}+\hat{a}^\dag,\hat{a}-\hat{a}^\dag]=\frac{i}{2}\,.
\end{align*}
Note for example that the variances of the quadratures will have a factor $1/2$ with respect to the normalization choice of \cite{duan2000inseparability}.}, while (ii) for any entangled Gaussian state there exist a value of $\kappa$ and pairs of quadratures such that Eq.~\eqref{eq:duan_witness} is violated. 

Our main result is an experimentally-friendly method for MDI entanglement detection inspired by the witness~\eqref{eq:duan_witness} and given by the following proposition:
\begin{proposition}
\label{prop:main}
Let $\ket\alpha$ and $\ket\beta$ be coherent states prepared by Alice and Bob according to the Gaussian probability distribution
\begin{align}
\label{eq:prior_gaus}
P(\alpha)=\frac{1}{\pi \sigma^2}e^{-|\alpha|^2/\sigma^2}
\qquad
\alpha \equiv\alpha_x+i\alpha_p\;
\end{align}
(for different choices of input distribution, see \cite{supmat}).
Consider the setup in Fig.~\eqref{fig:setup} in which uncharacterized local measurements are applied jointly on these states and half of an unknown state $\rho_{AB}$, producing as a result two real numbers $(a_1,a_2)$ for Alice and $(b_1,b_2)$ for Bob. For all local measurements and all separable states one has
\begin{align}
\left\langle{\rm MDIEW}_{\kappa}\right\rangle
\equiv\left\langle  U^2_\kappa\right\rangle+\left\langle V^2_\kappa\right\rangle 
\geq \dfrac{\kappa^2+\kappa^{-2}}{2} \dfrac{\sigma^2}{1+\sigma^2},
\label{eq:score_proposal}
\end{align}
where  $U_\kappa$ and $V_\kappa$ are
\begin{align}
\nonumber
U_\kappa\equiv\kappa a_1-\frac{b_1}{\kappa}-\frac{\kappa\alpha_x-\frac{\beta_x}{\kappa}}{\sqrt{2}}\;,
\\
V_\kappa\equiv\kappa a_2+\frac{b_2}{\kappa}-\frac{\kappa\alpha_p+\frac{\beta_p}{\kappa}}{\sqrt{2}}.
\label{eq:UkVk}
\end{align}
For any two-mode entangled Gaussian state, there exist local measurements acting jointly on the state and the input coherent states violating inequality~\eqref{eq:score_proposal}.
\end{proposition}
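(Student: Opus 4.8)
The plan is to prove the two halves of Proposition~\ref{prop:main} separately, establishing the separable bound~\eqref{eq:score_proposal} by reducing the MDI quantity to the original Duan-type witness~\eqref{eq:duan_witness} evaluated on a specific auxiliary state, and then showing that for entangled Gaussian states a suitable measurement saturates this reduction well enough to beat the bound.

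\textbf{Step 1: identify the ``swap test'' measurement.} First I would single out the measurement that heuristically makes the protocol work: Alice performs a joint heterodyne-type / homodyne-type measurement on her incoming mode $A$ and her trusted coherent state $A'$ that effectively teleports (or swaps) the state of $A$ onto classical outcomes $(a_1,a_2)$, calibrated so that $a_1 \approx (\hat x_A + \alpha_x)/\sqrt 2$, $a_2 \approx (\hat p_A + \alpha_p)/\sqrt 2$ in the appropriate sense, and similarly for Bob. With this ideal choice, $U_\kappa$ and $V_\kappa$ in~\eqref{eq:UkVk} become (classical-outcome proxies for) $\kappa \hat x_A - \hat x_B/\kappa = \hat u_\kappa$ and $\kappa \hat p_A + \hat p_B/\kappa = \hat v_\kappa$, so $\langle U_\kappa^2\rangle + \langle V_\kappa^2\rangle$ reproduces $\langle \mathrm{EW}_\kappa\rangle$ up to the vacuum/coherent-state noise that the input distribution~\eqref{eq:prior_gaus} adds. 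The factor $\sigma^2/(1+\sigma^2)$ is exactly the attenuation one expects from finite-width input priors (as $\sigma\to\infty$ one recovers the sharp Duan bound $(\kappa^2+\kappa^{-2})/2$), which tells me the bookkeeping must track how the Gaussian prior of width $\sigma$ rescales the effective state seen by the witness.

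\textbf{Step 2: the separable bound.} For the lower bound I would argue for a completely general (uncharacterized) pair of local POVMs and a general separable $\rho_{AB} = \sum_\mu p_\mu\, \rho_A^\mu \otimes \sigma_B^\mu$. By linearity it suffices to treat a product state $\rho_A \otimes \sigma_B$. The key move is the same Choi-type / conditional-state trick as in Proposition~\ref{lemma2}: fixing Alice's local measurement, the map $\ket\alpha\bra\alpha \mapsto$ (distribution of $(a_1,a_2)$) composed with the state $\rho_A$, averaged over the Gaussian prior, defines an effective classical-quantum object; one shows that the second moments $\langle U_\kappa^2\rangle$, $\langle V_\kappa^2\rangle$ can be written as variances of quadrature-like operators on an effective two-mode \emph{separable} state, plus a nonnegative ``measurement noise'' term. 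Because any genuine measurement adds at least the vacuum unit of noise dictated by the uncertainty relation $[\hat x,\hat p]=i/2$, and because the Gaussian prior contributes a known covariance, the sum is bounded below by $\min$ over separable states of the Duan functional, rescaled by $\sigma^2/(1+\sigma^2)$; the Duan inequality~\eqref{eq:duan_witness} then gives the stated right-hand side. Concretely I expect to invoke that for any POVM $\{E(a_1,a_2)\}$ and any state $\tau$, $\mathrm{Var}(a_1) + \text{(conjugate)} \geq$ the intrinsic quadrature uncertainty of $\tau$ — an Arthurs–Kelly / joint-measurement noise bound — which is the technical engine here.

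\textbf{Step 3: violation for entangled Gaussian states, and the main obstacle.} For the converse, given an entangled Gaussian $\rho_{AB}$, Duan's theorem (part (ii) quoted above) supplies a choice of local quadratures and a $\kappa$ with $\langle\mathrm{EW}_\kappa\rangle < (\kappa^2+\kappa^{-2})/2$; possibly after local Gaussian unitaries (phase rotations, squeezing) which Alice and Bob can absorb into their choice of measurement, one reduces to the standard form. I would then exhibit an explicit Gaussian measurement — the CV teleportation/swap measurement of Step 1, realized by mixing mode $A$ with $A'$ on a balanced beam splitter and homodyning the two output ports (the standard Braunstein–Kimble Bell measurement), and likewise for Bob — and compute $\langle U_\kappa^2\rangle + \langle V_\kappa^2\rangle$ for this measurement on $\rho_{AB}$ with the prior~\eqref{eq:prior_gaus}. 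The calculation should yield $\frac{\sigma^2}{1+\sigma^2}\langle\mathrm{EW}_\kappa\rangle$ plus terms that vanish in the right limit, so that whenever $\langle\mathrm{EW}_\kappa\rangle$ is below the Duan threshold the MDI quantity is below $\frac{\sigma^2}{1+\sigma^2}\frac{\kappa^2+\kappa^{-2}}{2}$, at least for $\sigma$ large enough; one then checks (or cites~\cite{supmat}) that the violation persists for the finite $\sigma$ actually used. \emph{The main obstacle} I anticipate is Step 2: making the noise-accounting rigorous for \emph{arbitrary} (non-Gaussian, infinite-dimensional) local POVMs — one must be careful that the ``variances are at least those of a separable state plus vacuum noise'' claim holds without any regularity assumption on the measurement, since an unbounded or badly-behaved POVM could in principle produce ill-defined second moments; handling this cleanly (perhaps by restricting to finite-second-moment outcome statistics, which is operationally natural, and then using a limiting/convexity argument) is where the real work lies. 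The violation direction is comparatively routine once the right measurement is written down.
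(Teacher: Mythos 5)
Your Step 3 (the violation) is essentially the paper's argument: the same balanced-beam-splitter-plus-homodyne measurement, the same reduction to the Duan quantity (the paper's exact identity is $\left\langle{\rm MDIEW}_{\kappa}\right\rangle=\frac{1}{2}\left(\frac{\kappa^2+\kappa^{-2}}{2}+\left\langle{\rm EW}_{\kappa}\right\rangle\right)$, independent of $\sigma$, followed by taking $\sigma$ large), modulo offsets for nonzero mean quadratures. That half is fine.

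The separable bound is where you have a genuine gap, and it is one you yourself flag as ``the main obstacle'' without resolving it. Your plan — rewrite $\langle U_\kappa^2\rangle+\langle V_\kappa^2\rangle$ as a Duan functional on an ``effective two-mode separable state'' plus an Arthurs--Kelly-type added noise — requires constructing that effective state from an arbitrary, possibly wildly non-Gaussian POVM, and you give no construction; moreover the Arthurs--Kelly joint-measurement bound alone cannot produce the prior-dependent factor $\sigma^2/(1+\sigma^2)$, which must come from a Bayesian argument. The paper avoids all of this with one observation you miss: for a product state $\rho_A\otimes\rho_B$ the outcome distributions on the two sides are completely independent, so $\langle\Delta^2 U_\kappa\rangle$ splits with no cross term into $\kappa^2\,\langle\Delta^2(a_1-\alpha_x/\sqrt2)\rangle+\kappa^{-2}\,\langle\Delta^2(b_1-\beta_x/\sqrt2)\rangle$, and since $\rho_A$ carries no information about $\alpha$, each side's problem is \emph{exactly} the Bayesian metrology problem of jointly estimating $(\alpha_x,\alpha_p)$ from a single copy of $\ket{\alpha}$ with Gaussian prior of width $\sigma$. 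The multi-parameter Bayesian quantum Cram\'er--Rao bound (Yuen--Lax; Genoni \emph{et al.}) gives the minimal summed error $\sigma^2/(1+\sigma^2)$ over \emph{all} POVMs, which is precisely what answers your worry about unbounded or badly-behaved measurements. The Duan inequality plays no role in the separable bound at all — it is only used in the violation direction. So the state $\rho_{AB}$ drops out of the lower bound entirely, and the ``effective separable state'' machinery you propose is both unnecessary and, as written, not established.
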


\begin{figure}
\includegraphics[width=0.48\textwidth]{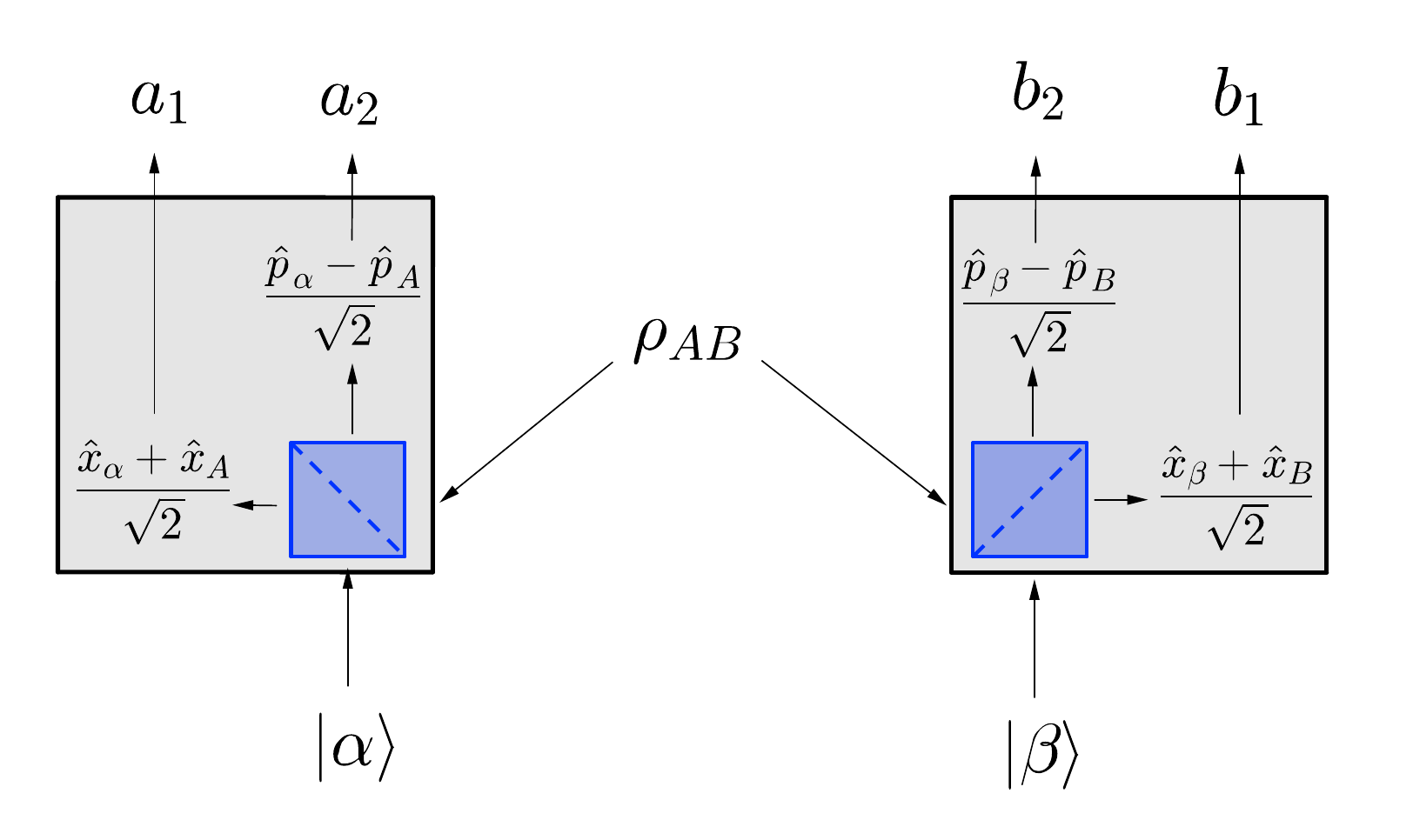}
\caption{Experimental setup for the MDI entanglement detection of a 2-mode state. Fiduciary coherent states are prepared by the parties and measured together with the corresponding subsystems of the unknown state $\rho_{AB}$. To compute the bound of the entanglement witness~\eqref{eq:score_proposal}, measurements should be seen as uncharacterized black boxes producing the outputs. To obtain a violation, the following specific measurements are implemented: the coherent states are mixed with the respective modes of $\rho_{AB}$ in a 50:50 beam splitter and homodyne measurements of $\hat{x}$ and $\hat{p}$ are performed on the two outputs.}
\label{fig:setup}
\end{figure}

Looking at its definition, the operational meaning of the witness is clear: Alice and Bob results should be such that their difference and sum, weighted by $\kappa$, are as close as possible to the same difference and sum of the quadratures of the coherent states, divided by $\sqrt 2$.
Note also that expectation values in~\eqref{eq:score_proposal} are computed with respect to the quantum state and the distribution of coherent states.


\begin{proof}[Proof of \eqref{eq:score_proposal}]
To prove the inequality we need to minimize the value of the witness over all separable states $\rho_{AB}=\sum_i p_i\rho_A^{(i)}\otimes \rho_B^{(i)}$. We can restrict the analysis to product states because the witness is linear on the state. 
It follows that the output probability factorises 
\begin{equation}
p(a_1,a_2,b_1,b_2|\alpha,\beta)=
\Tr\left[M^{A}_{a_1,a_2}\ket{\alpha}\bra{\alpha}\right]\Tr\left[M^{B}_{b_1,b_2}\ket{\beta}\bra{\beta}\right],
\end{equation}
and we are left with two independent POVMs on the input states to be optimised.
Using $\langle U_\kappa^2\rangle+\langle V_\kappa^2\rangle \geq  \langle\Delta^2 U_\kappa\rangle+ \langle\Delta^2 V_\kappa\rangle$ and the fact that the distribution is completely factorised between the two sides it follows that the value of the witness is lower bounded by the minimum of
\begin{equation}
\label{eq:metrobound}
\frac{1}{2}
\left(\kappa^2+\frac{1}{\kappa^2}\right)\left(\left\langle\Delta^2\left[\sqrt 2 a_1-\alpha_x\right]\right\rangle+ \left\langle\Delta^2\left[\sqrt 2 a_2-\alpha_p\right]\right\rangle\right).
\end{equation}
That is: in the absence of correlations, the best Alice and Bob can do is to separately perform the optimal measurements to estimate the input coherent states.

Minimizing the expression in the second parenthesis in~\eqref{eq:metrobound}  looks essentially like a metrology problem. A lower bound, in turn, can be obtained using a multi-parameter Bayesian version of the quantum Cram\'er-Rao bound \cite{yuen1973multiple}. The simultaneous estimation of the position and momentum quadratures has been studied thoroughly and is optimized for coherent states by measuring $\hat{x}$ and $\hat{p}$ on two different modes after a 50:50 beam-splitter~\cite{genoni2013optimal,morelli2020bayesian}. In particular, assuming a Gaussian prior distribution, like in our case, the minimal sum of variances is equal to $\sigma^2/(1+\sigma^2)$~\cite{genoni2013optimal}, which proves the bound for the entanglement witness.
\end{proof}

To prove the violation claimed in Proposition~\ref{prop:main}, it suffices to show that for any entangled two-mode Gaussian state there exist local measurements and values of $(\kappa, \sigma)$ that lead to it. Consider now the optical setup depicted in Fig.~\ref{fig:setup}. Alice and Bob, upon receiving their respective subsystems of $\rho_{AB}$, first mix them with local coherent states in a balanced beam splitter, and then measure the position and momentum quadratures in the output ports.  The output observables are thus
\begin{align}
\nonumber
\label{eq:obs_viol}
\hat{A}_1=\frac{\hat{x}_\alpha+\hat{x}_A}{\sqrt{2}}\;, \quad \hat{A}_2=\frac{\hat{p}_\alpha-\hat{p}_A}{\sqrt{2}}\;,
\\
\hat{B}_1=\frac{\hat{x}_\beta+\hat{x}_B}{\sqrt{2}}\;, \quad \hat{B}_2=\frac{\hat{p}_\beta-\hat{p}_B}{\sqrt{2}}.
\end{align}
The quadratures $\hat x_A$, $\hat p_A$, $\hat x_B$ and $\hat p_B$ are those used in the standard witness~\eqref{eq:duan_witness}. Observables~\eqref{eq:obs_viol} are used to define the measurement outputs needed for the computation of our MDI witness~\eqref{eq:score_proposal}. 
More precisely, consider first the case in which the average values of the state's quadratures  are null, $\langle\hat{x}_A\rangle=\langle\hat{p}_A\rangle=\langle\hat{x}_B\rangle=\langle\hat{p}_B\rangle=0$. Then, by taking $(a_1,a_2,b_1,b_2)$ equal to the statistical output of $(\hat{A}_1,\hat{A}_2,\hat{B}_1,\hat{B}_2)$ respectively, it follows by substituting in~\eqref{eq:UkVk} \footnote{We use that coherent states have minimum variances $\langle\Delta^2\hat{x}_{\alpha,\beta}\rangle=\langle\Delta^2\hat{p}_{\alpha,\beta}\rangle=\frac{1}{4}$ }
\begin{align}
\nonumber
    \left\langle U^2_\kappa\right\rangle  &=
    \left\langle\left(
    \kappa\frac{\hat{x}_\alpha+\hat{x}_A}{\sqrt{2}}-\frac{1}{\kappa}\frac{\hat{x}_\beta+\hat{x}_B}{\sqrt{2}} -\frac{\kappa \alpha_x-\frac{\beta_x}{\kappa}}{\sqrt{2}}
    \right)^2\right\rangle\\
    \nonumber
    &=\frac{1}{2}\left(
    \kappa^2\langle\Delta^2\hat{x}_\alpha\rangle+\frac{\langle\Delta^2\hat{x}_\beta\rangle}{\kappa^2} + \left\langle\Delta^2  \hat{u}_\kappa\right\rangle \right)\\
    &=\frac{1}{2}\left(\frac{\kappa^2+\kappa^{-2}}{4}+\left\langle\Delta^2  \hat{u}_\kappa\right\rangle\right)\;.
\end{align}
Similarly, $ \left\langle V^2_\kappa\right\rangle=\frac{1}{2}\left(\frac{\kappa^2+\kappa^{-2}}{4}+\left\langle\Delta^2  \hat{v}_\kappa\right\rangle\right)$, and consequently we find that in the proposed scheme
\begin{equation}
    \left\langle{\rm MDIEW}_{\kappa}\right\rangle=\frac{1}{2}\left(
    \frac{\kappa^2+\kappa^{-2}}{2}+
    \left\langle{\rm EW}_{\kappa}\right\rangle
    \right)\;.
    \label{eq:scheme_result}
\end{equation}
The generalization to states that have non-zero averages of the quadratures is obtained by simply offsetting the outputs accordingly as follows:  $a_1$ is the output of $\hat{A_1}-\langle\hat{x}_A\rangle/\sqrt{2}$, $a_2$ of $\hat{A_2}+\langle\hat{p}_A\rangle/\sqrt{2}$ and so on.

The violation of the inequality~\eqref{eq:score_proposal} is therefore found for any entangled Gaussian state: indeed, for any such state there exist quadratures and a value of $\kappa$ such that $\left\langle{\rm EW}_{\kappa}\right\rangle<  \frac{\kappa^2+\kappa^{-2}}{2}$, which implies from Eq.~\eqref{eq:scheme_result} that in the proposed scheme $ \left\langle{\rm MDIEW}_{\kappa}\right\rangle <  \frac{\kappa^2+\kappa^{-2}}{2}$. It is then sufficient to choose $\sigma$ large enough to violate~\eqref{eq:score_proposal}.

\section{Two-mode squeezed state case, with noise tolerance}
At last, to illustrate the feasibility of our scheme, we apply it to the case of $\rho_{AB}$ being a TMSV state. By including noise tolerance, we pave the way for an experimental realization of our MDI entanglement witness.

A TMSV state can be described as the mixing of two single-mode squeezed states (one squeezed in $\hat{p}$ and one in $\hat{x}$) \cite{braunstein2005quantum}. In the Heisenberg picture, this results in
\begin{align}
\nonumber
\hat{x}_A=\frac{e^r\hat{x}_1^{(0)}+e^{-r}\hat{x}_2^{(0)}}{\sqrt{2}}, \quad \hat{p}_A=\frac{e^{-r}\hat{p}_1^{(0)}+e^{r}\hat{p}_2^{(0)}}{\sqrt{2}},\\
\hat{x}_B=\frac{e^r\hat{x}_1^{(0)}-e^{-r}\hat{x}_2^{(0)}}{\sqrt{2}}, \quad \hat{p}_B=\frac{e^{-r}\hat{p}_1^{(0)}-e^{r}\hat{p}_2^{(0)}}{\sqrt{2}},
\label{eq:heisenberg_squeeze}
\end{align}
where the superscript $\{\hat{x}^{(0)},\hat{p}^{(0)}\}$ represents operators acting on the vacuum. 
Consider now the two operators
${\hat{u}_{\kappa=1}=\hat{x}_A-\hat{x}_B}$ and 
${\hat{v}_{\kappa=1}=\hat{p}_A+\hat{p}_B}$.
From Eq.~\eqref{eq:duan_witness} we see, by choosing $\kappa=1$, that these operators satisfy for any separable state
$\langle\Delta^2\hat{u}_{\kappa=1}\rangle+\langle\Delta^2\hat{v}_{\kappa=1}\rangle\geq 1$.
If we compute the above combination for the two-mode squeezed state, we obtain 
${\hat{u}_{\kappa=1}=\sqrt{2}e^{-r}\hat{x}_2^{(0)}}$
and
${\hat{v}_{\kappa=1}=\sqrt{2}e^{-r}\hat{p}_1^{(0)}}$. Consequently, it holds
\begin{align}
\langle \text{EW}_{k=1}\rangle_{\text{TMSV}}=e^{-2r}<1.
\end{align}
This is not surprising: as soon as there is squeezing $r>0$ the state is entangled.
Substituting this value into expression \eqref{eq:score_proposal}, we obtain the entangled score for the MDI witness
$ \left\langle{\rm MDIEW}_{\kappa=1}\right\rangle=\frac{1}{2}\left(1+e^{-2r}\right)$.

To check noise tolerance, we consider losses in the modes $A$ and $B$ modelled as a beam splitter
\begin{align}
\nonumber
\hat{a}_A(\eta_A)=\sqrt{1-\eta_A}\hat{a}_A(0)+\sqrt{\eta_A}\hat{a}^{(0)}_{NA}\ , \\
\hat{a}_B(\eta_B)=\sqrt{1-\eta_B}\hat{a}_B(0)+\sqrt{\eta_B}\hat{a}^{(0)}_{NB} \ ,
\label{eq:channel_noises}
\end{align}
where $\hat{a}^{(0)}_{NX}$ is a vacuum mode acting as a noise on mode $X$, while $\hat{a}_X(0)$ is the corresponding noiseless mode.
We focus on a natural scenario in which the source producing the two-mode squeezed state is between Alice and Bob and losses affect the two modes, not necessarily in a symmetric way. At the same time, the same loss noise \eqref{eq:channel_noises} applied to the input coherent states would lead only to a renormalization of $\alpha$ and $\beta$ and can be compensated by increasing the variance $\sigma$.
Using Eqs.~\eqref{eq:heisenberg_squeeze} and~\eqref{eq:channel_noises} we can accordingly compute 
\begin{widetext}
\begin{multline}
\label{eq:squeezed_score_noise}
\left\langle{\rm EW}_{\kappa}\right\rangle_{\text{TMSV},\eta_A,\eta_B}=
\left\langle \Delta^2 \left(\kappa\hat{x}_A(\eta_A)-\frac{\hat{x}_B(\eta_B)}{\kappa}\right) + \Delta^2 \left(\kappa\hat{p}_A(\eta_A)+\frac{\hat{p}_B(\eta_B)}{\kappa}\right)\right\rangle \\
=\frac{1}{2}\left(\kappa^2\eta_A+\frac{\eta_B}{\kappa^2}\right)+\frac{e^{2r}}{4}\left( \kappa\sqrt{1-\eta_A}-\frac{\sqrt{1-\eta_B}}{\kappa}\right)^2+\frac{e^{-2r}}{4}\left( \kappa\sqrt{1-\eta_A}+\frac{\sqrt{1-\eta_B}}{\kappa}\right)^2.
\end{multline}
\end{widetext}
Notice that it is always possible to choose $\kappa$ such that $ \kappa\sqrt{1-\eta_A}-\kappa^{-1}\sqrt{1-\eta_B}=0$ and $r$ big enough to nullify the last term of \eqref{eq:squeezed_score_noise}, and obtain a score~\eqref{eq:scheme_result} $\left\langle{\rm MDIEW}_{\kappa}\right\rangle=\frac{1}{4}\left(\kappa^2(\eta_A^2+1)+(\eta_B^2+1)\kappa^{-2}\right)$, which is lower than the separable bound $\frac{1}{2}\left(\kappa^2+\kappa^{-2}\right)\left(\frac{\sigma^2}{1+\sigma^2}\right)$ for large enough $\sigma$. In this sense the entanglement witness we analysed here is loss-resistant. In Figure~\ref{fig:noise_vs_entang} we plot the trade-off between noise, entanglement, and variance of the prior distributions $\sigma$ for the MDI detection of entanglement in the case of symmetric losses ($\eta_A=\eta_B$, $\kappa=1$).
\begin{figure}
\centering
\includegraphics[width=0.5\textwidth]{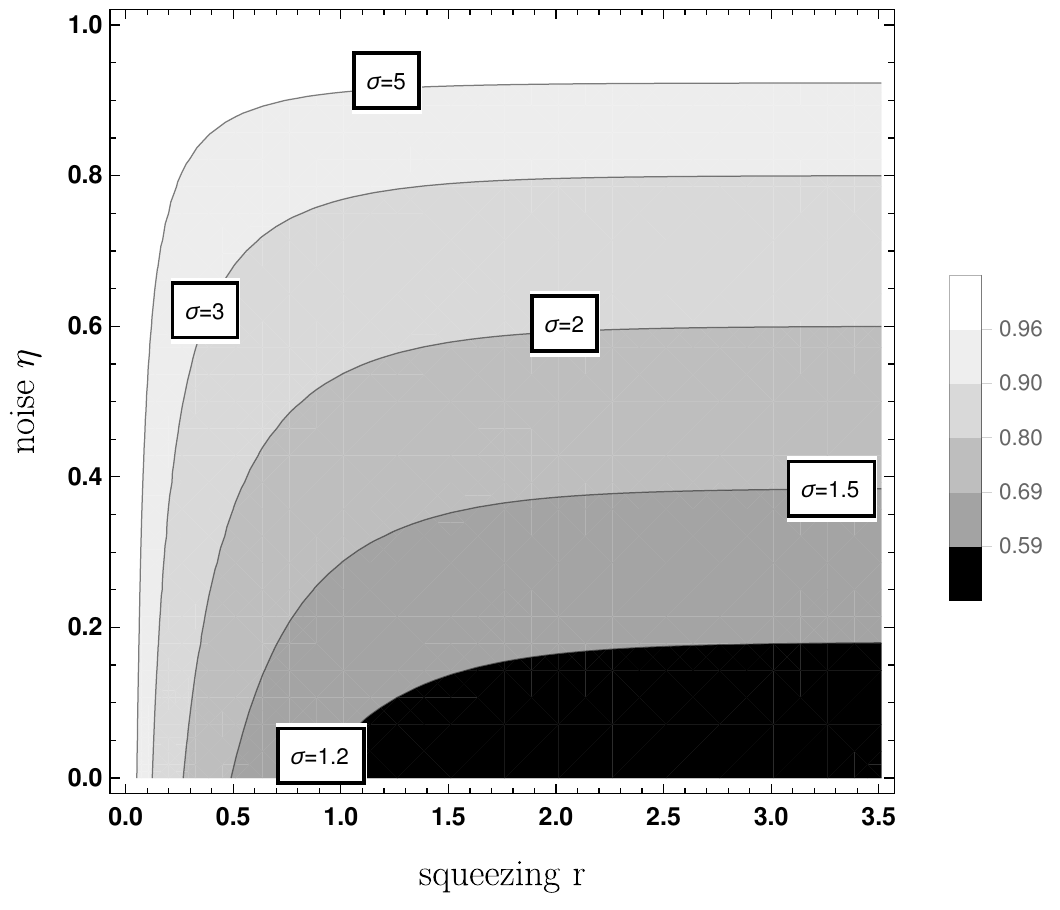}
\caption{Contourplot of the obtainable value~\eqref{eq:scheme_result} of $\left\langle{\rm MDIEW}_{\kappa=1}\right\rangle$, for a two-mode squeezed vacuum state with squeezing parameter $r$, under the presence of losses with parameter $\eta_A=\eta_B\equiv\eta$ (cf.~\eqref{eq:squeezed_score_noise}). 
The contours are chosen to match the separable bound~\eqref{eq:score_proposal} for different values of $\sigma$, which corresponds to the width of the Gaussian prior used in the experiment. Therefore
the area under each $\sigma$-contour defines the range of parameters for which MDI entanglement is certified. 
}
\label{fig:noise_vs_entang}
\end{figure}

\section{Discussion}

In this work we have promoted the task of measurement-device-independent entanglement certification to the continuous-variable regime. We first generalised the results of~\cite{buscemi2012all,branciard2013measurement} and proved that all continuous-variable entangled state can in principle be detected in this scenario. Then, we showed a simple MDI test able to detect the entanglement of all two-mode Gaussian entangled states. Most importantly, the test only relies on the preparation of coherent states and uses standard experimental setups, thus being readily available with current technology. 

Our work also opens up a series of interesting directions. From a general perspective, our works opens the path to the use of CV quantum systems for MDI tasks beyond entanglement detection, such as randomness generation or secure communication. Another possible research direction would be to investigate the possible generality of the connection between entanglement detection and metrology exploited here. In particular, can all MDI entanglement tests be translated into a parameter estimation problem? We are therefore confident that our results will motivate further studies in the field of quantum information with continuous variables.

\begin{acknowledgments}


 We thank Roope Uola for insightful discussions.
 This work was supported by the Government of Spain (FIS2020-TRANQI and Severo Ochoa CEX2019-000910-S), Fundaci\'{o} Cellex, Fundaci\'{o} Mir-Puig, Generalitat de Catalunya (CERCA, AGAUR SGR 1381 and QuantumCAT). A.A. is supported by the ERC AdG CERQUTE and the AXA Chair in Quantum Information Science. P. A. is supported by ``la Caixa" Foundation (ID 100010434, Grant  No.  LCF/BQ/DI19/11730023). S.B. acknowledges funding from the European Union's Horizon 2020 research and innovation program, grant agreement No. 820466 (project CiViQ). D.C. is supported by a Ramon y Cajal Fellowship (Spain).

\end{acknowledgments}

\medskip
\bibliography{MDIEW.bib}

\onecolumngrid
\appendix

\section{On the feasibility of Proposition~\ref{lemma2}}
\label{app:realising_prop1}

The proof of principle realisation of a MDI entanglement witness proposed in Sec.~\ref{sec:ProofOfPrinciple} is based on the possibility of performing the POVM defined by the projection on the two-mode squeezed vacuum state $\Phi^{(r)}$. Here we show that in principle this is realisable using photodetectors, single mode squeezing and a beam splitter. Indeed, it is known that a two-mode squeezed vacuum can be obtained by applying the following unitaries on a two-mode vacuum \cite{braunstein2005quantum}:
\begin{align}
    \ket{\Phi^{(r)}}_{12}=\hat{B}_{12}\hat{S}^{(r)}_1 \hat{S}^{(-r)}_2 \ket{00}_{12}\ .
\end{align}
Here $\hat{S}_X^{(r)}$represents a single-mode squeezing operation with squeezing parameter $r$ on mode $X$, which satisfies $\hat{S}_X^{(r)\dagger}=\hat{S}_X^{(-r)}$. The following unitary $\hat{B}_{12}$ is induced by a 50:50 beamsplitter. Therefore, for any two mode state $\ket{\psi}_{12}$, the projection on $\Phi^{(r)}$ can be simulated by a photodetection preceded by the corresponding inverse unitary
\begin{align}
_{12}\bra{\Phi^{(r)}}\psi\rangle_{12}=
_{12}\bra{00}\hat{S}^{(-r)}_1 \hat{S}^{(r)}_2\hat{B}^\dagger_{12}\ket{\psi}_{12}\;.
\end{align}
Additionally, the result of a projection on the vacuum might in principle simulated by double homodyne measurement of $\hat{x}$ and $\hat{p}$ on the incoming mode mixed with a vacuum mode in a 50:50 beam splitter. Indeed the statistics of such a continuous measurement (also called heterodyne measurement) is described by the Husimi Q-function \cite{genoni2013optimal,adesso2014continuous}, 
\begin{align}
    P\left(\frac{x}{\sqrt{2}},\frac{p}{\sqrt{2}}\right)dxdp=\frac{\bra{x+ip}\rho\ket{x+ip}}{\pi} dxdp\;,
\end{align}
therefore the projection $\bra{0}\rho\ket{0}$ can be approximated by counting the statistical outputs of $(x,p)$ being close to $(0,0)$.
The drawback of such a method of simulating photodetection would be discarding all the measurement that are outside the confidence-interval approximating $0$ on all the modes.

We notice as well that the energy scale defined by the original entanglement witness $W$ of $\rho_{AB}$ sets a bound on the violation of the POVM-entanglement-witness \eqref{eq:POVM_EW} proposed in the main text. Recall that
${\tilde{W}=\lambda^{-\hat{n}_A-\hat{n}_B} W \lambda^{-\hat{n}_A-\hat{n}_B}}$, where $\hat{n}_X=a^\dagger_X a_X$ is the photon-number operator of mode $X$. Therefore,
for $\tilde{W}=$ to be bounded,  $W$ needs to have finite energy as well. Suppose $W\lesssim e^{-(n_A+n_B)/N}$ for large energies. The value of $N$ is an upperbound on the energy scale of $W$. Then, from \eqref{eq:Wtilda} we see that for $\tilde{W}$ to be bounded as well, it is necessary that $\lambda^{-1}e^{-1/N}<1$ (recall that $0<\lambda<1$), hence good values for $\lambda$ are
\begin{align}
    e^{-1/N}<\lambda<1\;,
\end{align}
and so 
\begin{align}
 1-e^{-2/N}   > 1-\lambda^2 > 0\ .
\end{align}
Confronting this last bound with \eqref{eq:POVM_EW} we notice that for large energies $\mathcal{O}(N)$ of the original witness, the proposed violation is dampened by a factor $\sim\frac{2}{N}$.

\section{Generalization of Proposition~\ref{lemma2} to N modes}
\label{app:general_Nmodes}

Proposition~\ref{lemma2} can be easily generalised to states having any number of modes. To be explicit how this can be done, we show the generalisation to a bipartite state having two modes on Alice side, $A_1$, $A_2$, and one mode on Bob side, $B$. In such a case we substitute Eq.~\eqref{eq:P11} by
\begin{align}
P_\rho(1,1,1|\alpha_1,\alpha_2,\beta)
&=\Tr\left[\left(\Phi_{A_1A'_1}^{(r)}\otimes\Phi_{A_2A'_2}^{(r)}\otimes\Phi_{BB'}^{(r)}\right)\left(\proj{\alpha_1}_{A'_1}\otimes\proj{\alpha_2}_{A'_2}\otimes\rho_{A_1A_2B}\otimes\proj{\beta}_{B'}\right)\right]\nonumber\\
&=\Tr\left[M_{A'_1A'_2B'}^{(r)}\proj{\alpha_1}_{A'_1}\otimes\proj{\alpha_2}_{A'_2}\otimes\proj{\beta}_{B'}\right],
\label{eq:P11app}
\end{align}
where it is defined
\begin{equation}\label{eq:POVMapp}
M_{A'_1A'_2B'}^{(r)}:=\Tr_{A_1A_2B}\left[\left(\Phi_{A_1A'_1}^{(r)}\otimes\Phi_{A_2A'_2}^{(r)}\otimes\Phi_{BB'}^{(r)}\right)\left(\rho_{A_1A_2B}\otimes\eye_{A'_1A'_2B}\right)\right]\;.
\end{equation}
That is, for the generalization, each party generates a number of coherent states equal to the number of modes of his side of the partition;  projections on TMSV states are performed accordingly.
The derivation then follows in the same way as presented in the main text, that is, noticing that
\begin{equation}\label{eq:entMapp}
 M_{A'_1A'_2B'}^{(r)}=(1-\lambda^2)^{\frac{3}{2}} \lambda^{\hat{n}_{A_1}+\hat{n}_{A_2}+\hat{n}_B}\rho_{A_1A_2B}^T \lambda^{\hat{n}_{A_1}+\hat{n}_{A_2}+\hat{n}_B},
\end{equation}
and defining the entanglement witness $\tilde{W}$ for $M_{A'_1A'_2B'}^{(r)}$ in terms of the original entanglement witness $W$ of $\rho_{A_1A_2B}$,
\begin{align}
\label{eq:Wtildaapp}
    \tilde{W}=\lambda^{-\hat{n}_{A_1}-\hat{n}_{A_2}-\hat{n}_B} W^T \lambda^{-\hat{n}_{A_1}-\hat{n}_{A_2}-\hat{n}_B}\;.
\end{align}
It follows that
\begin{align}
\label{eq:POVM_EWapp}
\Tr[M_{A'_1A'_2B'}^{(r)}\tilde{W}]=(1-\lambda^2)^\frac{3}{2} \Tr[\rho_{A_1A_2B} W]<0\;,
\end{align} 
while for separable POVMs, it holds
\begin{equation}
\sum_\mu p_\mu \Tr[\tilde{W}(M_{A_1A_2}^\mu\otimes N_B^\mu)]=\sum_\mu p_\mu \Tr[W(\tilde{M}_{A_1A_2}^\mu\otimes \tilde{N}_B^\mu)],
\end{equation}
where
$
\tilde{M}_{A_1A_2}^\mu=\lambda^{\hat{n}_{A_1}+\hat{n}_{A_2}}(M_{A_1A_2}^\mu)^T\lambda^{\hat{n}_{A_1}+
\hat{n}_{A_2}}\;,\;
\tilde{N}_B^\mu=\lambda^{\hat{n}_B}(N_B^\mu)^T\lambda^{\hat{n}_B}\;.
$
Following the same reasoning as in the main text, the operators $\tilde{M}_{A_1A_2}^\mu$ and $\tilde{N}_B^\mu$ are positive, and can be seen as representing unnormalized states, thus generating a separable state $\rho'$ such that $\Tr[\rho' W]\geq 0$.

Hence we see that the proof of Proposition~\ref{lemma2} works for the case in which Alice has two modes.
The generalisation to any number of modes is straightforward.

\section{Fisher information matrix of the prior distribution, and possible prior choices.}
The separable lower bound~\eqref{eq:score_proposal} of Proposition~\ref{prop:main} in the main text, is derived from the multi-parameter quantum Cram\'er Rao bound based on the right logarithmic derivative (RLD) \cite{holevo2011probabilistic}, in its Bayesian form with additional prior information \cite{yuen1973multiple}. Such bound, when applied to the minimization of the sum of the single parameter variances (in our case $\alpha_x$ and $\alpha_p$, cf. Eq.~\eqref{eq:metrobound}), is presented in Eq.(14) of Ref.\cite{genoni2013optimal}. The information associated to the prior distribution $P(\alpha_x,\alpha_p)$ is encoded in the prior's Fisher Information Matrix (FIM) defined in Eq.(10) of the same Ref.\cite{genoni2013optimal}, that is
\begin{align}
    A_{ij}:=\int d\alpha_x d\alpha_p\; P(\alpha_x,\alpha_p)\frac{\partial \log P(\alpha_x,\alpha_p)}{\partial \alpha_i} \frac{\partial \log P(\alpha_x,\alpha_p)}{\partial \alpha_j}\;,
\end{align}
where $i$ and $j$ can be equal to $x$ and $p$. For the Gaussian distribution used in the main text \eqref{eq:prior_gaus} $P(\alpha_x,\alpha_p)=\frac{1}{\pi\sigma^2}\exp{[-(\alpha_x^2+\alpha_p^2)/\sigma^2]}$ it is not difficult to obtain that $A$ is diagonal
\begin{align}
\label{eq:gauss_fim}
    A^{\rm Gauss}=\begin{pmatrix}
    \frac{2}{\sigma^2} & 0 \\
    0 & \frac{2}{\sigma^2}
    \end{pmatrix}\;.
\end{align}
The property of $A$ being diagonal holds for any distribution in the form $P(\alpha_x,\alpha_p)=P(\alpha_x)P(\alpha_p)$ which is symmetric. In such a case $A_{ij}=F_{i}\delta_{ij}$, where $F_i$ is the single parameter Fisher information relative to $P(\alpha_i)$,
\begin{align}
    A^{\rm symm}=
    \begin{pmatrix}
    F_x & 0 \\
    0 & F_p
    \end{pmatrix}\;,\qquad F_i=\int d\alpha_i\; P(\alpha_i)\frac{\partial \log P(\alpha_i)}{\partial \alpha_i}\;.
\end{align}

As an example, it is possible to choose an almost-flat distribution on a square of size $l$, defined as
\begin{align}
    P(\alpha_x,\alpha_p)=\frac{1}{l^2}I_{\delta,l}(\alpha_x)I_{\delta,l}(\alpha_p)
\end{align}
where $I_{\delta,l}$ is a smooth version of the indicator function on the interval $[-l/2,+l/2]$, 
\begin{align}
  I_{\delta,l}=\begin{cases}
  0 & x\leq -\frac{l}{2}-\frac{\delta}{2} \\
  \frac{1}{2}+\frac{1}{2}\sin(\frac{\pi}{\delta}(x+\frac{l}{2})) & -\frac{l}{2}-\frac{\delta}{2}\leq x \leq -\frac{l}{2}+\frac{\delta}{2}\\
  1 & -\frac{l}{2}+\frac{\delta}{2}\leq x \leq \frac{l}{2}-\frac{\delta}{2}\\
   \frac{1}{2}-\frac{1}{2}\sin(\frac{\pi}{\delta}(x-\frac{l}{2})) & \frac{l}{2}-\frac{\delta}{2}\leq x \leq \frac{l}{2}+\frac{\delta}{2}\\
   0 & x\geq \frac{l}{2}+\frac{\delta}{2}
  \end{cases}
\end{align}
Note that for consistency it is needed $\delta\leq l$, and the limit cases $\delta=0$ and $\delta=l$, correspond respectively to $I_{\delta,l}$ being the indicator function and $I_{\delta,l}$ being a single symmetric cosinusoidal wave between $-l$ and $+l$.
For such a choice of $P(\alpha_x,\alpha_p)$ the corresponding fisher information matrix is easily computed as
\begin{align}
    A^{(\delta,l)}=
    \begin{pmatrix}
    \frac{\pi^2}{l\delta} & 0 \\
    0 & \frac{\pi^2}{l\delta}
    \end{pmatrix}\;.
\end{align}
We see that in the limit $\delta\rightarrow 0$ the FIM diverges and becomes useless for computing Cram\'er Rao bounds, which become trivial in such limit. This is related to the fact that the multi-parameter Cram\'er Rao bounds are not tight in general \cite{holevo2011probabilistic} (notice that in case of Gaussian prior it can be saturated \cite{genoni2013optimal}). In the limit $\delta=l$, we see that such a prior distibution is equivalent to the Gaussian case~\eqref{eq:gauss_fim}, provided
\begin{align}
   \frac{l}{\pi}=\frac{\sigma}{\sqrt{2}} \;.
\end{align}


\end{document}